\newtheorem{theorem}{Theorem}
\newtheorem{lemma}[theorem]{Lemma}
\newtheorem{MIdef}[theorem]{Definition}
\DeclareMathOperator{\turn}{turn} %
\DeclareMathOperator{\alt}{alt} %
\DeclareMathOperator{\ball}{B} %
\title{Drawing the double circle on a grid of minimum size}
\author{
S. Bereg\thanks{Department of Computer Science, University of Texas at Dallas, USA. Email: besp@utdallas.edu.}
\and
R. Fabila-Monroy\thanks{CINVESTAV, Instituto Polit\'ecnico Nacional, Mexico. Email: ruyfabila@math.cinvestav.edu.mx. Partially supported by grant 153984 (CONACyT, Mexico).}
\and
D. Flores-Pe\~naloza\thanks{Departamento de Matem\'aticas, Facultad de Ciencias, UNAM, Mexico. Email: dflorespenaloza@gmail.com.
Partially supported by grants 168277 (CONACyT, Mexico) and IA102513 (PAPIIT, UNAM, Mexico).}
\and
M. A. Lopez\thanks{Department of Computer Science, University of Denver, USA. Email: mlopez@du.edu.}
\and
P. P\'erez-Lantero\thanks{Escuela de Ingenier\'ia Civil en Inform\'atica, Universidad de Valpara\'{i}so, Chile. Email: pablo.perez@uv.cl. Partially supported by grant CONICYT, FONDECYT/Iniciaci\'on 11110069 (Chile).}
}
\begin{document}
\maketitle

\begin{abstract}
In 1926, Jarn\'ik introduced the problem of drawing a convex $n$-gon 
with vertices having integer coordinates. He constructed such a 
drawing in the grid $[1,c\cdot n^{3/2}]^2$ for some constant $c>0$, and showed 
that this grid size is optimal 
up to a constant factor. We consider the analogous problem for 
drawing the double circle, and prove that it can be done within
the same grid size. 
Moreover, we give an $O(n)$-time algorithm to 
construct such a point set. 
%
\end{abstract}


\section{Introduction}

Given $n\geq 3$, a {\em double circle} is a set
$P=\{p_0,p_1,\ldots,p_{n-1},p'_0,p'_1,\ldots,p'_{n-1}\}$
of $2n$ planar points in general position such that:
(1) $p_0,p_1,\ldots,p_{n-1}$ are precisely the vertices of the
convex hull of $P$ labelled in counterclokwise order around the boundary;
(2) point $p'_i$ is close to
the segment joining $p_i$ with $p_{i+1}$;
(3) the line passing through $p_i$ and $p'_i$ separates
$p_{i+1}$ from $P$; and
(4) the line passing through $p'_i$ and $p_{i+1}$ separates
$p_{i}$ from $P$ (see Figure~\ref{fig:double-circle}).
Subindices are taken modulo $n$.
The double circle has been considered in combinatorial geometry and it is conjectured
to have the least number of triangulations~\cite{aichholzer2004,aichholzer2008}.

\begin{figure}[h]
	\centering
	\includegraphics[scale=0.6]{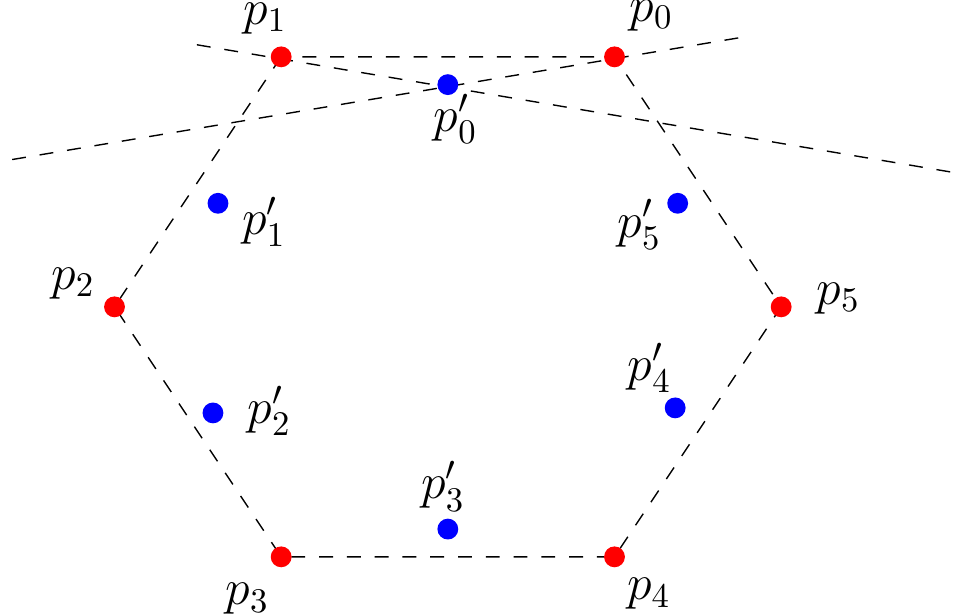}
	\caption{\small{A double circle of twelve points.}}	
	\label{fig:double-circle}
\end{figure}

Drawing an $n$-vertex convex polygon with integer vertices can be easily done
by considering the $n$ points $(1,1),(2,4),(3,9),\ldots,(n,n^2)$ as the vertices
of the polygon. In this case the {\em size} of the integer point set is equal
to $n^2-1=\Theta(n^2)$, where size refers to the smallest $N$ such that the point set
can be translated to lie in the grid $[0,N]^2$. In 1926, Jarn\'ik~\cite{jarnik1925}
showed how to draw an $n$-vertex convex polygon with size $N=O(n^{3/2})$ and proved that
this bound is optimal. In recent years the so-called Jarn\'ik polygons and extensions
of them have been studied~\cite{barany2012,martin2003}.

Given any integer point $(i,j)$, we say that $(i,j)$ is {\em visible} (from the origin)
if the interior of the line segment joining the origin and $(i,j)$ contains no
lattice points. Observe that $(i,j)$ is visible
if and only if $\gcd(i,j)=1$,
where $\gcd(i,j)$ denotes the greatest common divisor of $i$ and $j$.
We consider points as vectors as well, and vice versa.
A Jarn\'ik polygon is formed by choosing a natural number $Q$, and taking
the set $V_Q$ of visible vectors $(i,j)$ such that
$\max\{|i|,|j|\}\leq Q$~\cite{huxley1996,jarnik1925,martin2003}. The polygon
is then the unique (up to translation) convex polygon whose edges, viewed as vectors, are precisely
the elements of $V_Q$, that is, the vertices can be obtained by starting
from an arbitrary point and adding the vectors of $V_Q$, one by one,
in counterclockwise order, to the previously computed vertex (see Figure~\ref{fig:jarnik-example}).

\begin{figure}[h]
	\centering
	\includegraphics[scale=0.4]{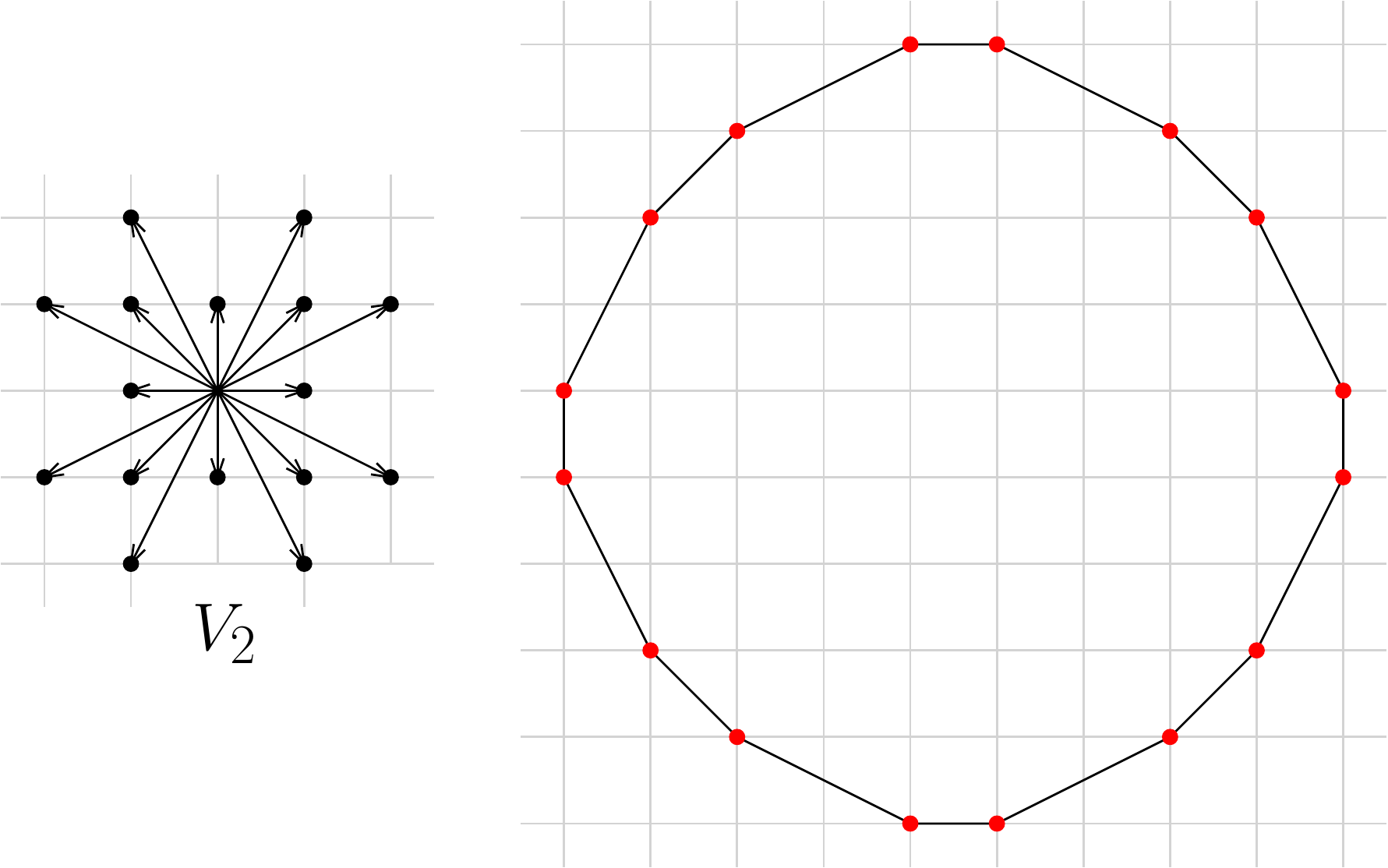}
	\caption{\small{A Jarn\'ik polygon (right) and its generating vectors $V_2$ (left).}}		
	\label{fig:jarnik-example}
\end{figure}

We study how to draw a $2n$-point double circle with integer points
using the smallest size $N$. We present an $O(n)$-time algorithm
that correctly constructs the double circle with size within $O(n^{3/2})$, where
that bound is also optimal. We consider the unit cost RAM model, where all
operations with real numbers, including the floor and ceiling functions, require constant time.
In Section~\ref{sec:algorithm} we show our algorithm, 
and in Section~\ref{sec:proof} its correctness is proved. Finally,
in Section~\ref{sec:conclusions}, we state future work.
Some examples are given in Appendix~\ref{sec:examples}.

\section{Double circle construction}\label{sec:algorithm}

Observe that a simple construction with quadratic size is
as follows: Consider the function
$f(x)=x^2+x$. For $i=1,\ldots,2n-1$, add the point $(i,f(i))$ if $i$ is odd, and the point
$(i,f(i)+2)$ otherwise. The final point is $(n,\frac{f(2n-1)+f(1)}{2}-1)=(n,2n^2-n)$, i.e., the
point just below the midpoint of the segment connecting $(1,f(1))$ and $(2n-1,f(2n-1))$.
The size of the resulting point set is $N=f(2n-1)-f(1)=(2n-1)^2+(2n-1)-2=4n^2-2n-2=\Theta(n^2)$.

We say that a sequence $V$ of vectors is {\em symmetric} if $V$ contains an even number
of vectors sorted counterclockwise around the origin, and for every vector $a$ in $V$ its
opposite vector $-a$ is also in $V$. Observe that any sequence of vectors defining a
Jarn\'ik polygon is symmetric. For any sequence $V=[v_1,v_2,\ldots,v_{2t}]$ of $2t$ vectors 
let the point set $\mathcal{P}(V):=\{p_1,p_2,\ldots,p_{2t}\}$, where $p_1=v_1$ and
$p_i=p_{i-1}+v_i$ for $i=2,\ldots,2t$. Note that if we sort the
elements of $V_Q$ around the origin then the elements of $\mathcal{P}(V_Q)$ are the
vertices of the Jarn\'ik polygon. Furthermore, if $V$ is symmetric
then the elements of $\mathcal{P}(V)$ are in convex position.
Let sequence $\alt(V):=[v_2,v_1,v_4,v_3,\ldots,v_{2t},v_{2t-1}]$
(see Figure~\ref{fig:alternate-example2} for an example with $t=8$).
For any scalar $\lambda$ let the sequence
$\lambda V:=[\lambda v_1,\lambda v_2,\ldots,\lambda v_{2t}]$.

\begin{figure}[h]
	\centering
	\includegraphics[scale=0.45]{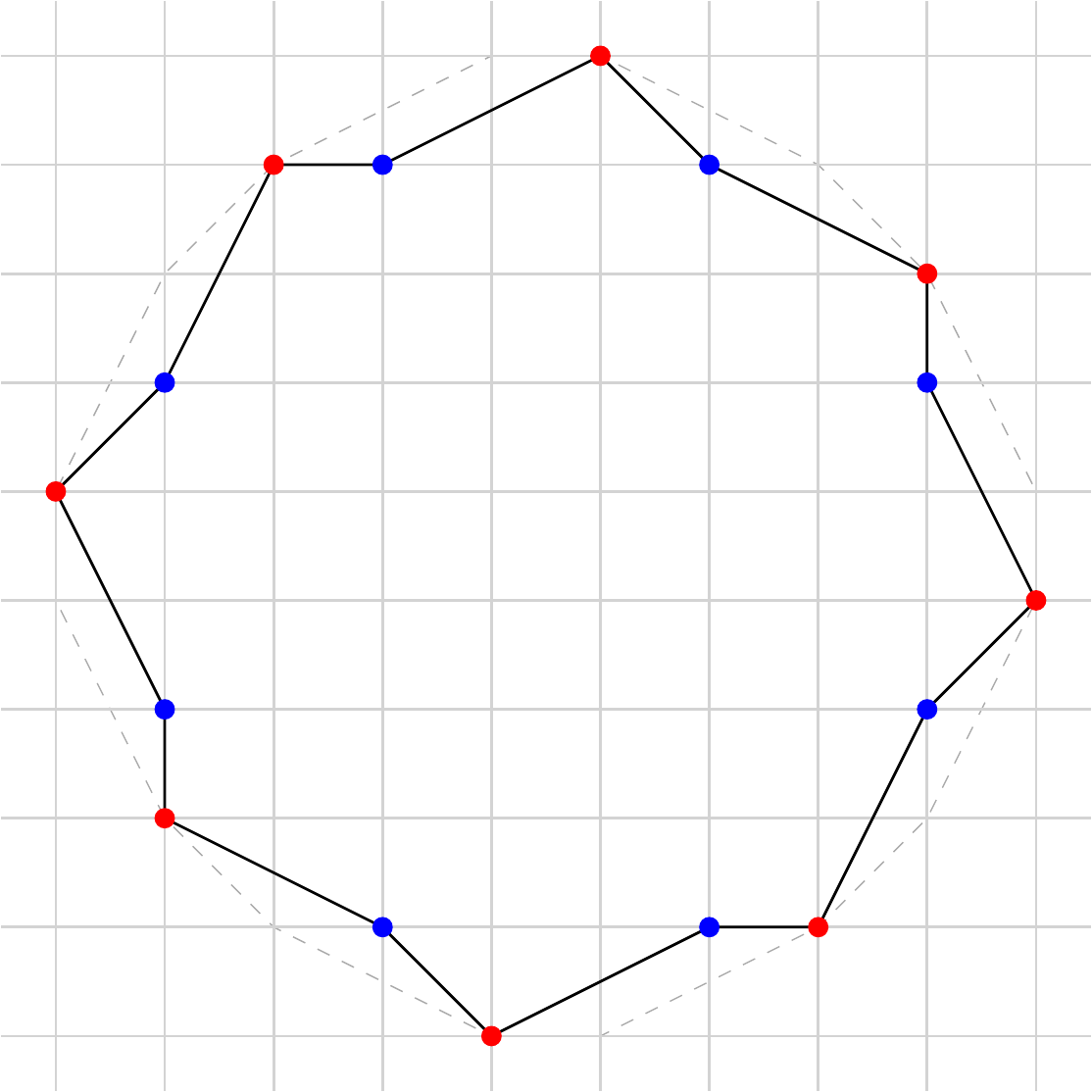}
	\caption{\small{$\mathcal{P}(\alt(V_4))$.}}
	\label{fig:alternate-example2}
\end{figure}

The idea is to generate a suitable symmetric sequence $V$ of $2n$ vectors and
then build the point set $\mathcal{P}(\alt(V))$ as the double circle point set,
up to some transformation of the elements of $\alt(V)$.
A (not optimal) example is $V=[(1,1),(1,2),\ldots,(1,n),(-1,-1),$ $(-2,-2),\ldots,(-1,n)]$
for even $n\ge 4$. The point set $\mathcal{P}(\alt(V))$ is in fact a double circle but its size is
equal to $1+2+\ldots+n=\Theta(n^2)$ (see Figure~\ref{fig:naive}).

\begin{figure}[h]
	\centering
	\begin{tabular}{cc}
		\includegraphics[scale=0.45]{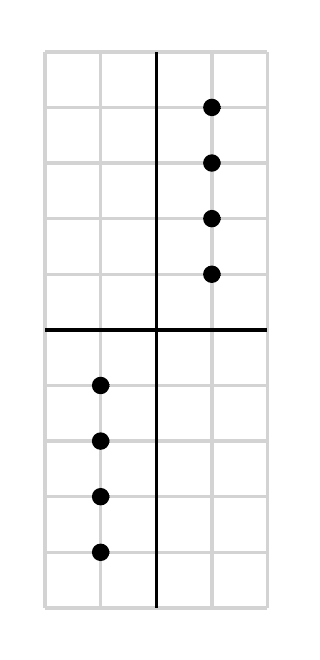} &
		\includegraphics[scale=0.45]{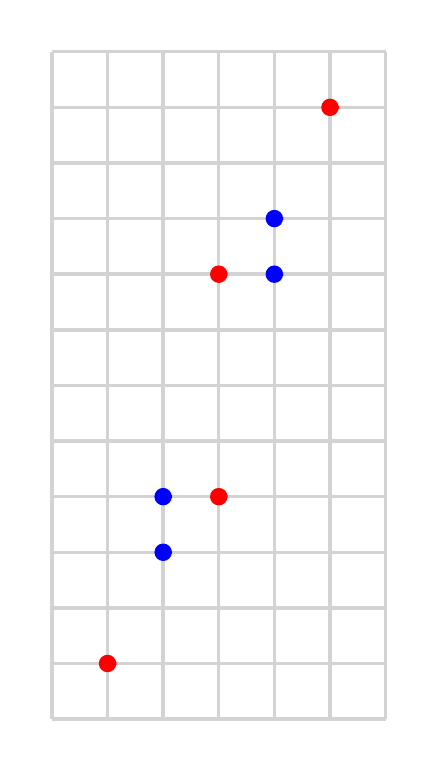}
	\end{tabular}
	\caption{\small{A naive construction for $n=4$ showing both vectors
		(left) and the resulting point set (right).}}	
	\label{fig:naive}
\end{figure}

The construction in which the resulting point set is a double circle of size $O(n^{3/2})$
is based on the next two algorithms:

{\sc VisibleVectors}$(n)$: With input $n\geq 3$, the symmetric sequence $V$ of
$2n$ visible vectors, sorted
counterclockwise around the origin, is generated so as to satisfy the next two invariants. Let
$\ball_t:=\{p\in\mathbb{Z}^2:\|p\|_1\leq t\}$, $k:=\max_{v\in V}\|v\|_1$, and
(even) $s$ be the number of visible vectors of $\ball_{k-1}$:
(i) all visible vectors of $\ball_{k-1}$ are in $V$, and
(ii) the other elements of $V$ are
generated as follows, until $2n-s$ elements are obtained: for $i=1,\dots,k-1$ generate
vectors $(i,k-i)$, $(-i,-(k-i))$, $(-i,k-i)$, $(i,-(k-i))$
in this order, if and only if $\gcd(i,k-i)=1$.
Refer to Algorithm~\ref{alg:visible-vectors} for
a pseudo-code.

{\sc BuildDoubleCircle}$(n)$: With input $n\geq 3$, build a
$2n$-point double circle. First, set sequences
$V:=${\sc VisibleVectors}$(n)$ and $[v'_1,v'_2,\ldots,v'_{2n}]:=\alt(V)$.
Then, the sequence $W=[w_1,w_2,\ldots,w_{2n}]$ of $2n$ vectors is created as follows:
for $i=1,3,\ldots,2n-1$ set
$w_i=(1-\lambda)v'_i+\lambda v'_{i+1}$ and
$w_{i+1}=\lambda v'_i+(1-\lambda) v'_{i+1}$, where $\lambda=1/3$. Finally, build the
$2n$-point set $\mathcal{P}((1/\lambda)W)$ as the double circle.

\begin{center}
\small
\begin{pseudocode}{VisibleVectors}{n}
\label{alg:visible-vectors}
k \GETS 1,
V \GETS [(1,0),(-1,0),(0,1),(0,-1)]\\
\REPEAT
	k\GETS k+1\\
	\FOR i \GETS 1 \TO k-1 \DO
	\BEGIN
		j \GETS k-i\\
		\IF \CALL{gcd}{i,j}=1 \THEN
		\BEGIN
			\IF \CALL{length}{V}<2n \THEN V\GETS V+[(i,j),(-i,-j)]\\
			\IF \CALL{length}{V}<2n \THEN V\GETS V+[(-i,j),(i,-j)]\\
		\END\\
	\END\\
\UNTIL \CALL{length}{V}=2n\\
\mbox{Sort } V \mbox{ counterclockwise around origin}\\
\RETURN V
\end{pseudocode}
\end{center}


\section{Construction correctness}\label{sec:proof}

Let $V=[v_1,v_2,\ldots,v_{2n}]$ be the (circular)
sequence of vectors obtained by executing
{\sc VisibleVectors}$(n)$, for $n \geq 3$.
For every $i=1,3,5,\ldots,2n-1$ we say that the pair
of vectors $v_{i},v_{i+1}$ is a pair of $\alt(V)$.

\begin{lemma}[Chapter 2 of~\cite{huxley1996}]\label{lem:jarnik-properties}
Given a natural number $Q$, the number $|V_Q|$ of vertices of the Jarn\'ik polygon is equal to
$$4+4\mbox{ }\underset{gcd(i,j)=1}{\sum\limits_{i=1}^Q\mbox{ }
\sum\limits_{j=1}^Q\mbox{ }1}=\frac{24Q^2}{\pi^2}+O(Q\log Q).$$
The size $S(Q)$ of the Jarn\'ik polygon is equal to
\begin{eqnarray*}
1+2\mbox{ }\underset{gcd(i,j)=1}{\sum\limits_{i=1}^Q\mbox{ }
\sum\limits_{j=1}^Q\mbox{ }i}&=&1+2\mbox{ }\underset{gcd(i,j)=1}{\sum\limits_{i=1}^Q\mbox{ }
\sum\limits_{j=1}^Q\mbox{ }j}\\
&=&\frac{6Q^3}{\pi^2}+O(Q^2\log Q).
\end{eqnarray*}
\end{lemma}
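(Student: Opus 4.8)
The plan is to reduce both quantities to sums over coprime pairs in the first quadrant and then invoke classical lattice-point counting. Recall that the generating set $V_Q$ consists of the four axis vectors $(\pm 1,0),(0,\pm 1)$ together with, for each coprime pair $(i,j)$ with $1\le i,j\le Q$, the four vectors $(i,j),(-i,-j),(-i,j),(i,-j)$. This immediately gives $|V_Q| = 4 + 4\,C(Q)$, where $C(Q)=\sum_{\gcd(i,j)=1,\ 1\le i,j\le Q}1$ is the number of first-quadrant coprime pairs. For the size, I would use the fact (established earlier for symmetric sequences) that $\mathcal{P}(V_Q)$ is convex and centrally symmetric, so its horizontal width equals the sum of the $x$-components of the rightward-pointing edge vectors, i.e. those traversed from the leftmost to the rightmost vertex. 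The rightward vectors are $(1,0)$ together with $(i,j)$ and $(i,-j)$ for each first-quadrant coprime pair, whence the width equals $1+2\sum_{\gcd(i,j)=1,\ 1\le i,j\le Q} i$. The symmetry $(i,j)\mapsto(j,i)$ shows this equals the analogous sum in $j$, and central symmetry makes the vertical extent equal, so this quantity is indeed $S(Q)$.

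For the asymptotics of $|V_Q|$, I would appeal to the standard estimate $C(Q)=\frac{Q^2}{\zeta(2)}+O(Q\log Q)=\frac{6Q^2}{\pi^2}+O(Q\log Q)$. This follows by M\"obius inversion, writing $C(Q)=\sum_{d\le Q}\mu(d)\lfloor Q/d\rfloor^2$, replacing $\lfloor Q/d\rfloor$ by $Q/d$, and using $\sum_{d\ge 1}\mu(d)/d^2=1/\zeta(2)$ together with $\sum_{d\le Q}1/d=O(\log Q)$ to control the error. Multiplying by $4$ and absorbing the additive $4$ into the error term yields $\frac{24Q^2}{\pi^2}+O(Q\log Q)$.

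For $S(Q)$ the core computation is the same M\"obius inversion applied to the weighted sum $T(Q)=\sum_{\gcd(i,j)=1}i$. Inverting gives $T(Q)=\sum_{d\le Q}\mu(d)\,\big(\sum_{i\le Q,\ d\mid i}i\big)\,\lfloor Q/d\rfloor$, and evaluating the inner sum as $\tfrac{d}{2}\lfloor Q/d\rfloor(\lfloor Q/d\rfloor+1)$, the leading term becomes $\frac{Q^3}{2}\sum_{d\le Q}\frac{\mu(d)}{d^2}=\frac{Q^3}{2\zeta(2)}+O(Q^2)=\frac{3Q^3}{\pi^2}+O(Q^2)$. Doubling and adding $1$ then gives $S(Q)=\frac{6Q^3}{\pi^2}+O(Q^2\log Q)$, where the $\log Q$ enters only through the error analysis below.

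I expect the main obstacle to be twofold. First, the geometric identity ``size $=$ sum of rightward edge projections'' must be justified carefully: one checks that the edges, being the elements of the symmetric sequence $V_Q$, are genuinely sorted by angle around the origin, that the closing condition $\sum_{v\in V_Q}v=0$ holds, and that no cancellation occurs when accumulating the positive $x$-components. Second, the error bookkeeping in $T(Q)$ is the delicate part: one must carry the $O(1)$ error from each approximation $\lfloor Q/d\rfloor\approx Q/d$ through a product of two such floor factors against a weight of size $O(Q^2/d)$, and it is precisely the harmonic sum $\sum_{d\le Q}1/d=O(\log Q)$ surfacing at this stage that forces the $\log Q$ factor rather than a clean $O(Q^2)$ bound.
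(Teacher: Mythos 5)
The paper does not prove this lemma at all: it is imported as a known result, with a pointer to Chapter~2 of Huxley's book, and the text immediately moves on. Your attempt is therefore not a variant of the paper's argument but a self-contained substitute for the citation, and as such it is essentially correct. The decomposition $|V_Q|=4+4C(Q)$ with $C(Q)$ the number of first-quadrant coprime pairs, the identity ``horizontal extent equals the sum of the positive $x$-components of the edge vectors'' (legitimate because the edges of the Jarn\'ik polygon are, by definition, exactly the elements of $V_Q$ sorted by angle, they sum to zero since $V_Q=-V_Q$, and hence the positive-$x$ edges form the contiguous sub-walk from the leftmost to the rightmost vertex), and the M\"obius-inversion estimates $C(Q)=\sum_{d\le Q}\mu(d)\lfloor Q/d\rfloor^2=\frac{6Q^2}{\pi^2}+O(Q\log Q)$ and $T(Q)=\frac{Q^3}{2\zeta(2)}+O(Q^2\log Q)$ are all standard and correctly assembled; you also correctly identify the harmonic sum $\sum_{d\le Q}1/d$ as the source of the $\log Q$ factor in both error terms. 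What the paper's choice buys is brevity and an authoritative source; what yours buys is verifiability and self-containment: nothing beyond elementary lattice-point counting and M\"obius inversion is needed, so the lemma becomes checkable without consulting the reference.

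One point to repair in your write-up: the equality of the vertical and horizontal extents does not follow from \emph{central} symmetry, as you assert --- a centrally symmetric convex polygon can perfectly well have different width and height. What you need is either the same projection argument applied to the $y$-components (giving height $=1+2\sum j$ over coprime pairs) combined with your swap bijection $(i,j)\mapsto(j,i)$, or the observation that this swap makes the polygon symmetric under reflection in a line of slope $1$. Since the swap bijection already appears in your proof, this is a one-line fix rather than a gap, but as stated the appeal to central symmetry is the wrong symmetry.
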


\begin{lemma}\label{lem:correct-visible-points}
$V$ is symmetric and point set $\mathcal{P}(V)$ has size $O(n^{3/2})$.
\end{lemma}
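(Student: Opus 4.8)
The plan is to establish the two claims separately. For symmetry, I would observe that the algorithm {\sc VisibleVectors}$(n)$ never adds a vector without simultaneously adding its opposite: at $k=1$ it seeds $V$ with the four axis vectors, which come in opposite pairs $(1,0),(-1,0)$ and $(0,1),(0,-1)$, and in each iteration of the main loop it appends either the pair $[(i,j),(-i,-j)]$ or the pair $[(-i,j),(i,-j)]$, each of which consists of a vector and its negation. Since the \textbf{length} guard is checked before each pair is appended as a unit, $V$ always has even length and is closed under negation. Together with the final counterclockwise sort, this is exactly the definition of a symmetric sequence, so the first assertion follows essentially by inspection of the pseudo-code.

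For the size bound, the key observation is that the generated set $V$ is sandwiched between two Jarn\'ik vector sets. First I would let $k=\max_{v\in V}\|v\|_1$ as in the invariants, and note that $V$ contains \emph{all} visible vectors of $\ball_{k-1}$ (invariant (i)) plus some visible vectors of $\ell_1$-norm exactly $k$. Since $\mathcal{P}(V)$ is a convex polygon whose edge vectors are the elements of $V$, its size (the side length of the smallest enclosing axis-parallel square) is controlled by the sums $\sum_{v\in V, v_x>0} v_x$ and $\sum_{v\in V, v_y>0} v_y$ of the positive coordinate contributions, exactly as in Lemma~\ref{lem:jarnik-properties}. The plan is to compare these sums against the corresponding Jarn\'ik sums $S(Q)$ for appropriate $Q$: because every vector in $V$ has $\|v\|_1\le k$, each positive coordinate is at most $k$, so the size of $\mathcal{P}(V)$ is at most (a constant times) $k$ times the number of vectors, but more precisely it is bounded by $S(k)$, the size of the full Jarn\'ik polygon $V_k$ built from the $\ell_\infty$-ball, which by Lemma~\ref{lem:jarnik-properties} is $\Theta(k^3)$.

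The crux is then to pin down $k$ as a function of $n$. Since $|V|=2n$ and $V$ consists of visible vectors of $\ell_1$-norm at most $k$, I would use the standard density estimate that the number of visible (primitive) vectors in $\ball_{k}$ is $\Theta(k^2)$ — this is the $\ell_1$ analogue of the $\frac{24Q^2}{\pi^2}$ count in Lemma~\ref{lem:jarnik-properties}, differing only by the shape of the ball and hence only by a constant factor in the implied density $6/\pi^2$. Setting this count equal to $2n$ gives $k=\Theta(\sqrt{n})$. Substituting back, the size is $O(k^3)=O(n^{3/2})$, as claimed. The main obstacle I anticipate is the bookkeeping needed to justify that the size of $\mathcal{P}(V)$ is genuinely $\Theta(k^3)$ and not merely $O(k\cdot n)=O(k^3)$ by a cruder argument — one must argue that the partial, norm-$k$ layer added in invariant (ii) does not break the near-balanced symmetric structure that makes the half-sums of positive coordinates the correct measure of the size; but since $V$ is symmetric, the positive and negative coordinate sums cancel in direction and the enclosing square side is exactly the sum of positive $x$-contributions (equivalently $y$-contributions), which the Jarn\'ik estimates bound cleanly. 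Once $k=\Theta(\sqrt n)$ is in hand, the $O(n^{3/2})$ bound is immediate.
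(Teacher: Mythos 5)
Your proposal is correct and follows essentially the same route as the paper: symmetry from the fact that vectors are appended in opposite pairs and then sorted, then $V$ sandwiched between Jarn\'ik vector sets so that the count in Lemma~\ref{lem:jarnik-properties} forces $k=\Theta(\sqrt{n})$, and finally the size bounded via the positive-coordinate sums by $S(k)=\Theta(k^3)=O(n^{3/2})$. The only cosmetic difference is that you invoke an $\ell_1$-ball density estimate as ``standard,'' whereas the paper makes exactly this precise through the containments $V_{\lfloor (k-1)/2\rfloor}\subset V\subset V_k$, i.e., the same constant-factor comparison of norm balls that you sketch.
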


\begin{proof}
Observe that for every vector $a$ in $V$, $-a$ is also in $V$ since
in algorithm {\sc VisibleVectors} the vectors are added to sequence $V$ in pairs, and
each pair consists of two opposite vectors. Then $V$ becomes symmetric once
the elements of $V$ are sorted counterclockwise around the origin. On the other
hand $V_{\lfloor\frac{k-1}{2}\rfloor}\subset V\subset V_k$,
where $k=\max_{v\in V}\|v\|_1$.
Then we have
$|V_{\lfloor\frac{k-1}{2}\rfloor}|\leq 2n\le |V_k|$,
which implies $k=\Theta(\sqrt{n})$ by Lemma~\ref{lem:jarnik-properties}.
By the same lemma we obtain:
\begin{eqnarray*}
\sum_{i=1}^nx(v_i),\sum_{i=1}^ny(v_i)&<&1+2\mbox{ }\underset{gcd(i,j)=1}{\sum\limits_{i=1}^k\mbox{ }\sum\limits_{j=1}^k\mbox{}i}\\
&=&S(k)\\
&=&\Theta(k^3)=\Theta(n^{3/2})
\end{eqnarray*}
Hence, the size of $\mathcal{P}(V)$ is $O(n^{3/2})$.
\end{proof}

Let $o$ denote the origin of coordinates. Given two points $p,q$ let $\ell(p,q)$
denote the line passing through $p$ and $q$ and {\em directed} from $p$ to $q$, and
$pq$ denote the segment joining $p$ and $q$.
Given three points $p=(x_p,y_p)$, $q=(x_q,y_q)$, and $r=(x_r,y_r)$, let
$\Delta(p,q,r)$ denote the triangle with vertices at $p$, $q$, and $r$;
$A(p,q,r)$ denote de area of $\Delta(p,q,r)$; and
$\turn(p,q,r)$ denote the so-called {\em geometric turn} (going from $p$ to $r$ passing through $q$)
where
$$\turn(p,q,r)=\left|
\begin{array}{ccc}
x_p & y_p & 1\\
x_q & y_q & 1\\
x_r & y_r & 1
\end{array}
\right|$$
and
$A(p,q,r)=\frac{1}{2}\left|\turn(p,q,r)\right|$. Extending this notation,
let $\Delta(p,q):=\Delta(o,p,q)$, $A(p,q):=A(o,p,q)$, and $\turn(p,q):=\turn(o,p,q)$.
We use the so-called Pick's theorem:

\begin{theorem}[Pick's theorem~\cite{pick1899}]
The area of any simple
polygon $H$ with lattice vertices is equal to $i+b/2-1$, where $i$ and $b$ are the numbers of lattice
points in the interior and the boundary of $H$, 
respectively.
\end{theorem}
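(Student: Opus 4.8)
The plan is to show that the \emph{Pick functional} $\Phi(H):=i+b/2-1$ obeys exactly the same cut-and-paste rule as the area $A(H)$, and then to pin both quantities down on a single base case. Throughout, note that for lattice vertices the turn determinant $\turn(p,q,r)$ is an integer, so $2A(H)$ is always an integer; this is a convenient consistency check on the counts that follow.

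First I would prove the \emph{additivity lemma}: if a simple lattice polygon $H$ is split by a chord $\sigma$ (a straight segment whose two endpoints are lattice points on $\partial H$ and whose relative interior lies inside $H$) into two simple lattice polygons $H_1,H_2$, then $\Phi(H)=\Phi(H_1)+\Phi(H_2)$. Area is trivially additive, $A(H)=A(H_1)+A(H_2)$, so the entire content is lattice-point bookkeeping. If $\sigma$ carries $m$ lattice points in total (its two endpoints included), its $m-2$ interior points become interior to $H$ while the two endpoints stay on $\partial H$, giving $i(H)=i(H_1)+i(H_2)+(m-2)$ and $b(H)=b(H_1)+b(H_2)-2m+2$. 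Substituting into $\Phi$, the $m$-terms cancel and the additive constant comes out to $-2=(-1)+(-1)$, so $\Phi(H)=\Phi(H_1)+\Phi(H_2)$ as claimed. The same count applies verbatim when the shared boundary is a connected chain of two chords meeting at one interior lattice point, a variant I will need below.

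Next I would reduce to a single triangle. Every simple polygon with at least four vertices has an interior diagonal, so by the additivity lemma and induction on the number of vertices it suffices to prove $A(T)=\Phi(T)$ for an arbitrary lattice triangle $T$. Here I induct on the number of lattice points in the closed triangle. If $T$ has a lattice point on one of its edges, the chord from the opposite vertex to that point splits $T$ into two lattice triangles with strictly fewer lattice points, and the single-chord lemma closes the step. If $T$ has a lattice point $p$ in its interior, the three segments from $p$ to the vertices partition $T$ into three lattice triangles; gluing two of them along a single chord and then attaching the third along the two-chord chain through $p$ (applying the two-chord form of the lemma) again reduces $T$ to strictly smaller pieces. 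Repeating, the induction bottoms out at \emph{primitive} triangles, those whose only lattice points are the three vertices, for which $i=0$, $b=3$, and hence $\Phi=1/2$.

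The main obstacle is therefore the base case: a primitive lattice triangle has area exactly $1/2$. I would prove this by the reflection trick. Translating one vertex to the origin, write $T=\Delta(u,v)$ with $u,v\in\mathbb{Z}^2$, and reflect $T$ through the midpoint of the segment $uv$; since the reflection map $x\mapsto u+v-x$ fixes the lattice (because $u+v\in\mathbb{Z}^2$), it carries any lattice point of the reflected triangle to a lattice point of $T$. As $T$ is primitive, the reflected triangle is empty as well, so the parallelogram with vertices $o,u,v,u+v$ contains no lattice point other than its four corners. An empty lattice parallelogram is a fundamental domain for $\mathbb{Z}^2$, whence the index of the sublattice $\mathbb{Z}u+\mathbb{Z}v$ in $\mathbb{Z}^2$ is $1$; equivalently $|\det(u,v)|=1$. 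Therefore $A(T)=\tfrac12|\turn(u,v)|=\tfrac12|\det(u,v)|=\tfrac12=\Phi(T)$, which establishes the base case and, together with the two inductions above, proves Pick's theorem.
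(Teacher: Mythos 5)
Your proposal is correct, but there is nothing in the paper to compare it against: the paper does not prove Pick's theorem, it imports it as a classical result with a citation to Pick's 1899 paper, and then uses it purely as a black box (in Lemma~\ref{lem:consecutive-vectors-1} to conclude that a lattice triangle whose only lattice points are its vertices has area exactly $1/2$, and in Lemma~\ref{lem:lambda} to bound the area of any nonempty lattice triangle below by $1/2$). What you have supplied is the standard textbook proof, and it is sound. The three ingredients all check out: the additivity of the functional $\Phi(H)=i+b/2-1$ under cutting along a lattice chord, with the bookkeeping $i(H)=i(H_1)+i(H_2)+(m-2)$ and $b(H)=b(H_1)+b(H_2)-2m+2$ giving exact cancellation of the $m$-terms (and the same count does apply verbatim to a two-edge chain through an interior lattice point, which you correctly need for the interior-point case of the triangle induction); the double induction reducing first to triangles via interior diagonals and then to primitive triangles via lattice points on edges or in the interior, each split strictly decreasing the relevant counter; and the base case, where the reflection $x\mapsto u+v-x$ preserves the lattice and shows the parallelogram spanned by $u,v$ is empty, hence $\mathbb{Z}u+\mathbb{Z}v=\mathbb{Z}^2$ and $|\det(u,v)|=1$, so the primitive triangle has area $1/2=\Phi$. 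Two steps are quoted rather than proved --- the existence of an interior diagonal in any simple polygon with at least four vertices, and the fact that an empty half-open lattice parallelogram tiles the plane so that the spanned sublattice has index one --- but both are standard, and that level of granularity is entirely reasonable for a statement the paper itself treats as known.
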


\begin{lemma}\label{lem:consecutive-vectors-1}
For every two consecutive vectors $a_1,a_2$ of $V$ we have $A(a_1,a_2)=1/2$.
\end{lemma}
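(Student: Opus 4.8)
The plan is to reduce the statement to an \emph{emptiness} claim about the triangle and then invoke Pick's theorem. Since $A(a_1,a_2)=\frac12|\turn(a_1,a_2)|$ and consecutive elements of $V$ are distinct primitive vectors that are not antipodal (as $2n\ge 6$), the triangle $\Delta(o,a_1,a_2)$ is non-degenerate. It therefore suffices to prove that the \emph{closed} triangle $\Delta(o,a_1,a_2)$ contains no lattice point other than its three vertices $o$, $a_1$, $a_2$: if so, then $b=3$ and $i=0$ in Pick's theorem, giving $A(a_1,a_2)=0+3/2-1=1/2$.

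So I would argue by contradiction, assuming some lattice point $p\notin\{o,a_1,a_2\}$ lies in $\Delta(o,a_1,a_2)$. The point $p$ cannot lie on the open segment $oa_1$ or $oa_2$, since $a_1$ and $a_2$ are visible (primitive); hence $p$ lies in the interior of the triangle or on the open segment $a_1a_2$, and in either case the direction $\ell(o,p)$ falls strictly inside the cone spanned by $a_1$ and $a_2$. Let $q$ be the primitive vector in the direction of $p$, so $p=g\,q$ for some integer $g\ge 1$; then $q$ is a visible vector whose direction lies strictly between those of $a_1$ and $a_2$, and $\|q\|_1\le\|p\|_1$. Writing $p=\alpha a_1+\beta a_2$ with $\alpha,\beta\ge 0$ and $\alpha+\beta\le 1$, the triangle inequality gives $\|p\|_1\le\alpha\|a_1\|_1+\beta\|a_2\|_1\le k$, where $k=\max_{v\in V}\|v\|_1$.

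I would then split on $\|q\|_1$. If $\|q\|_1\le k-1$, invariant~(i) guarantees $q\in V$, but $q$ lies strictly between the consecutive vectors $a_1$ and $a_2$, a contradiction. The remaining, and crucial, case is $\|q\|_1=k$: this forces $g=1$ (so $p=q$ is primitive of $\ell_1$-norm exactly $k$) and $\alpha+\beta=1$, i.e. $p$ lies on the open segment $a_1a_2$. Because the $\ell_1$ norm is linear on each closed quadrant, the equalities $\|p\|_1=k$ and $\alpha+\beta=1$ can hold only when $a_1$, $a_2$, and $p$ all lie in one closed quadrant with $\|a_1\|_1=\|a_2\|_1=k$; in any other configuration a sign cancellation yields $\|p\|_1<k$, returning us to the previous case. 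Here the generation order of invariant~(ii) becomes essential: within a fixed quadrant the norm-$k$ visible vectors are collinear (they lie on the line $|x|+|y|=k$) and are appended in the monotone order of the index $i$, so the norm-$k$ elements of $V$ in that quadrant are exactly those with the smallest indices. Consequently a primitive norm-$k$ vector $p$ lying strictly between two \emph{included} norm-$k$ vectors $a_1,a_2$ of the same quadrant must itself be in $V$, contradicting that $a_1,a_2$ are consecutive.

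The reduction via Pick's theorem and the visibility/convex-combination bookkeeping are routine. I expect the main obstacle to be precisely the boundary case $\|q\|_1=k$, where one must rule out a norm-$k$ visible vector being stranded strictly between two consecutive elements of $V$; the key observation that makes this go through is the collinearity of the norm-$k$ visible vectors in each quadrant combined with the monotone order in which invariant~(ii) generates them.
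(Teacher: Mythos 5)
Your proposal is correct and follows essentially the same route as the paper's own proof: assume a lattice point in $\Delta(o,a_1,a_2)$, pass to the primitive vector $q$ in its direction, use invariant~(i) when $\|q\|_1\le k-1$ and invariant~(ii) in the boundary case where $q,a_1,a_2$ have norm $k$ and share a quadrant, then conclude via Pick's theorem. If anything, your treatment of the boundary case is slightly more explicit than the paper's (deriving tightness of all inequalities to force $p=q$, equal norms, and a common quadrant, and spelling out the prefix property of the generation order), where the paper instead disposes of $\|a_1\|_1\ne\|a_2\|_1$ by a separate intersection-point argument and cites invariant~(ii) without elaboration.
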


\begin{proof}
Suppose $\Delta(a_1,a_2)$ contains a lattice point $p$ different from $o$, $a_1$, and $a_2$.
Then $p$ cannot belong to segments $oa_1$ and $oa_2$, and
segment $op$ contains a visible point $q$ (possibly equal to $p$).
If $\|q\|_1<\max\{\|a_1\|_1,\|a_2\|_1\}$ then $q$
must belong to $V$ by invariant (i)
of algorithm {\sc VisibleVectors}.
Otherwise, we have $\|q\|_1=\max\{\|a_1\|_1,\|a_2\|_1\}$.
Suppose w.l.o.g.\ that $\|a_1\|_1<\|a_2\|_1$, and 
let the point $q'$ denote the intersection of $\ell(o,q)$ with
the segment $s$ connecting $a_1$ to $a_2$. 
Observe that $q'=\delta a_1+(1-\delta)a_2$
for some $\delta\in(0,1)$, and further that 
$\|q\|_1\leq\|q'\|_1=\|\delta a_1+(1-\delta)a_2\|_1\leq \delta\|a_1\|_1+(1-\delta)\|a_2\|_2<\|a_2\|_1$,
which is a contradiction.
Then we must have that $\|q\|_1=\|a_1\|_1=\|a_2\|_1$,
which implies that $q,a_1,a_2$ belong to a same quadrant
since in this case $q$ is at the interior of the segment $s$.
Therefore, $q$ must belong to $V$ by invariant (ii) of algorithm {\sc VisibleVectors}.
In both cases, the fact that $q$ belongs to $V$ contradicts the fact that
$a_1$ and $a_2$ are consecutive vectors of $V$.
Hence $A(a_1,a_2)=1/2$ by Pick's theorem.\end{proof}

Let $\lambda\in(0,1/2)$. Given a pair $a,b$ of vectors
let $h(\lambda,a,b):=(1-\lambda)a+\lambda b$
(see Figure~\ref{fig:vector-tranf}).

\begin{figure}[h]
	\centering
	\includegraphics[width=4.0cm]{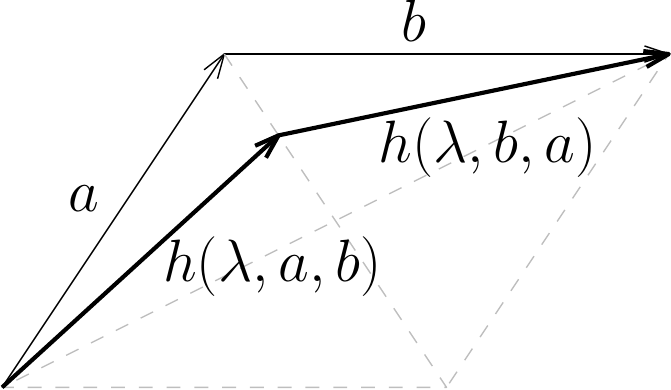}
	\caption{\small{
		Two vectors $a$ and $b$, and the
		vectors $h(\lambda,a,b)$ and $h(\lambda,b,a)$.}}	
	\label{fig:vector-tranf}
\end{figure}

\begin{lemma}\label{lem:lambda}
Let $a_1,a_2,a_3,a_4$ be four consecutive vectors of $V$ such that
$a_1,a_2$ and $a_3,a_4$ are pairs of $\alt(V)$.
Let $\lambda\in(0,1/2)$,
$q_1=h(\lambda,a_2,a_1)$, $q_2=q_1+h(\lambda,a_1,a_2)$, $q_3=q_2+h(\lambda,a_4,a_3)$,
and $q_4=q_3+h(\lambda,a_3,a_4)$. Then 
$q_2$ is to the right of $\ell(o,q_1)$ and both $q_3$ and $q_4$ are to the left of $\ell(o,q_1)$.
\end{lemma}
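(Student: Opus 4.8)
The plan is to translate each assertion into a sign condition on the turn functional $\turn(q_1,\cdot)=\turn(o,q_1,\cdot)$, which, for points regarded as vectors from $o$, is exactly the bilinear antisymmetric cross product $\turn(u,v)=x_uy_v-y_ux_v$: a point $r$ lies to the left (resp.\ right) of the directed line $\ell(o,q_1)$ precisely when $\turn(q_1,r)>0$ (resp.\ $<0$). Thus I must show $\turn(q_1,q_2)<0$, $\turn(q_1,q_3)>0$, and $\turn(q_1,q_4)>0$.

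First I would simplify the four points. Writing $q_1=\lambda a_1+(1-\lambda)a_2$ and adding the successive increments $h(\lambda,a_1,a_2)$, $h(\lambda,a_4,a_3)$, $h(\lambda,a_3,a_4)$, the weights $\lambda$ and $1-\lambda$ cancel in pairs, leaving the $\lambda$-free expressions $q_2=a_1+a_2$ and $q_4=a_1+a_2+a_3+a_4$, together with $q_3=a_1+a_2+\lambda a_3+(1-\lambda)a_4$. This is the observation that makes the computation short, since two of the three turns then reduce to combinations of only the pairwise cross products $c_{ij}:=\turn(a_i,a_j)$.

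Next I would record the values and signs of those cross products. As $a_1,a_2,a_3,a_4$ are consecutive in $V$, Lemma~\ref{lem:consecutive-vectors-1} gives $A(a_i,a_{i+1})=1/2$, and since $V$ is sorted counterclockwise the orientation is positive, so $c_{12}=c_{23}=1$. For the non-adjacent pairs I would argue that $a_1,a_3$ and $a_2,a_4$ each span an angle strictly less than $\pi$ while $a_1,a_4$ span at most $\pi$; because the vectors are integral their cross products are integers, which yields $c_{13}\ge 1$, $c_{24}\ge 1$, and $c_{14}\ge 0$. The span bound is where the symmetry of $V$ enters: among the $2n$ counterclockwise-sorted vectors, exactly $n-1\ge 2$ lie strictly between any vector and its antipode, so four consecutive vectors never over-shoot the antipode of the first by more than one step (with equality $a_4=-a_1$, and hence $c_{14}=0$, only when $n=3$).

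Finally I would expand each turn by bilinearity and substitute. For $q_2$ this gives $\turn(q_1,q_2)=(2\lambda-1)c_{12}=2\lambda-1<0$ since $\lambda<1/2$. For $q_4$, collecting terms yields $\turn(q_1,q_4)=\lambda+\lambda c_{13}+\lambda c_{14}+(1-\lambda)c_{24}$, which is positive. For $q_3$ the expansion is $\turn(q_1,q_3)=(2\lambda-1)c_{12}+\lambda^2c_{13}+\lambda(1-\lambda)(c_{14}+c_{23})+(1-\lambda)^2c_{24}$; replacing $c_{13},c_{24}$ by $1$ and $c_{14}$ by $0$ bounds it below by $\lambda^2+\lambda>0$. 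The step requiring genuine care, as opposed to the mechanical bilinear algebra, is the angular-span claim for the non-adjacent cross products — in particular that $c_{14}$ is never negative — since once those bounds are in place the three sign conditions follow immediately.
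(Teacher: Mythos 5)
Your proof is correct, and its skeleton is the same as the paper's: translate each claim into a sign condition on $\turn(q_1,\cdot)$, expand by bilinearity, use Lemma~\ref{lem:consecutive-vectors-1} to pin down the cross products of consecutive pairs, and lower-bound the cross products of the non-adjacent pairs $(a_1,a_3),(a_2,a_4),(a_1,a_4)$. The interesting difference is how those last bounds are obtained, and there your version is tighter than the paper's. The paper asserts $\turn(a_i,a_j)>0$ for $i\in\{1,2\}$, $j\in\{3,4\}$ and, via Pick's theorem, $A(a_i,a_j)\ge 1/2$ in equation~(\ref{eq5}); both assertions fail when $n=3$, since then the symmetric sequence $V$ has only six vectors, so $a_4=-a_1$, the triangle $\Delta(o,a_1,a_4)$ degenerates, and $\turn(a_1,a_4)=0$. (The paper's conclusion survives: dropping the $A(a_1,a_4)$ term from equation~(\ref{eq5}) still leaves $\turn(q_1,q_3)\ge \lambda+\lambda^2>0$, exactly the bound you derive, but as written the paper's chain of inequalities is not justified in that case.) You instead get $\turn(a_1,a_3),\turn(a_2,a_4)\ge 1$ and $\turn(a_1,a_4)\ge 0$ from integrality of lattice cross products combined with the angular-span argument --- exactly $n-1$ vectors of the sorted symmetric sequence lie strictly between a vector and its antipode --- and you isolate the degenerate case $a_4=-a_1$ explicitly. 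This replaces Pick's theorem by the more elementary fact that a nonzero integer determinant has absolute value at least $1$, and it handles $n=3$ cleanly; your span argument does silently use that $V$ consists of distinct visible (primitive) vectors, so that $\pm a_1$ are the only elements of $V$ on the line $\ell(o,a_1)$, but that is immediate from the construction. The remaining difference is cosmetic: for $q_4$ the paper adds the increment $h(\lambda,a_3,a_4)$ to $q_3$ and checks that its turn against $q_1$ is positive, whereas you expand $\turn(q_1,q_4)$ from scratch; both yield the same conclusion.
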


\begin{proof}
(Refer to Figure~\ref{fig:proof-lem-lambda}.)
We have that:
\begin{eqnarray*}
\turn(q_1,q_2) & = & \turn(q_1,q_1+h(\lambda,a_1,a_2))\\
 & = & \turn((1-\lambda)a_2+\lambda a_1,a_1+a_2)\\
 & = & (1-\lambda)\turn(a_2,a_1)+\lambda\turn(a_1,a_2)\\
 & = & 2(2\lambda-1)A(a_1,a_2)\\
 & < & 0,
\end{eqnarray*}
which implies that $q_2$ is to the right of the line $\ell(o,q_1)$. 
\begin{figure}[h]
	\centering
	\includegraphics[scale=0.8]{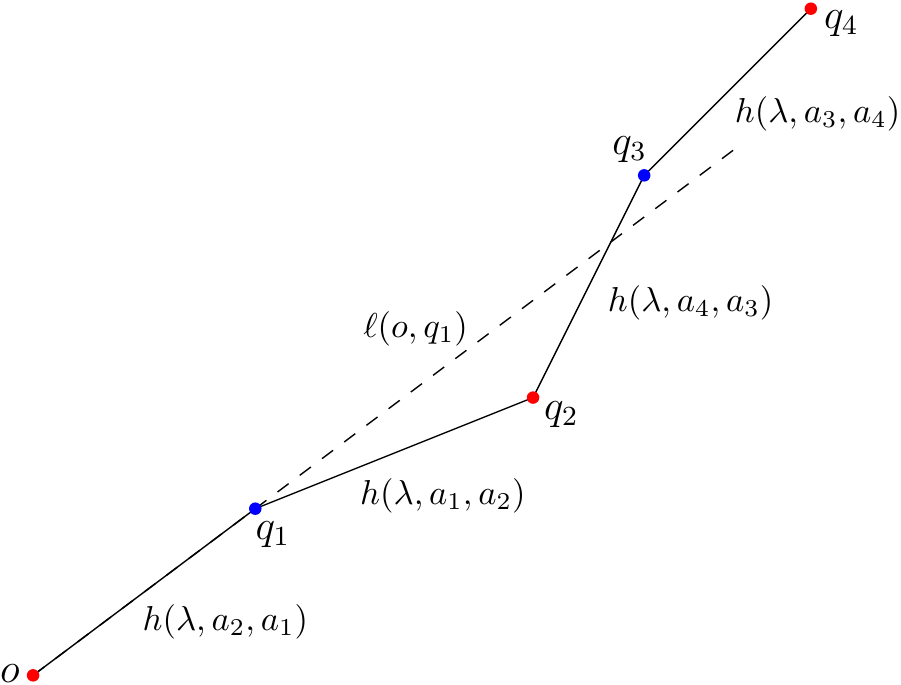}
	\caption{\small{Proof of Lemma~\ref{lem:lambda}.}}	
	\label{fig:proof-lem-lambda}
\end{figure}
On the other hand:
\begin{eqnarray}
\nonumber
\turn(q_1,q_3)&=&\turn(h(\lambda,a_2,a_1),h(\lambda,a_2,a_1)+h(\lambda,a_1,a_2)+\\
\nonumber
& &h(\lambda,a_4,a_3))\\
\nonumber
&=&\turn(h(\lambda,a_2,a_1),h(\lambda,a_1,a_2))+\\
\nonumber
& &\turn(h(\lambda,a_2,a_1),h(\lambda,a_4,a_3))\\
\nonumber
&=&\turn((1-\lambda)a_2+\lambda a_1,(1-\lambda)a_1+\lambda a_2))+\\
\nonumber
& &\turn((1-\lambda)a_2+\lambda a_1,(1-\lambda)a_4+\lambda a_3))\\
\nonumber
&=&(1-\lambda)^2\turn(a_2,a_1)+\lambda^2\turn(a_1,a_2)+\\
\nonumber
& &(1-\lambda)^2\turn(a_2,a_4)+\lambda(1-\lambda)\turn(a_2,a_3)+\\
\nonumber
& &\lambda(1-\lambda)\turn(a_1,a_4)+\lambda^2\turn(a_1,a_3)\\
\nonumber
&=& 2\Bigl((2\lambda-1)A(a_1,a_2)+(1-\lambda)^2A(a_2,a_4)+\\
\nonumber
& &\lambda(1-\lambda)A(a_2,a_3)+\lambda(1-\lambda)A(a_1,a_4)+\\
\nonumber
& &\lambda^2A(a_1,a_3)\Bigl)\\
\label{eq4}
&=& 2\biggl(\frac{1}{2}\left(2\lambda-1\right)+(1-\lambda)^2A(a_2,a_4)+\\
\nonumber
& &\lambda(1-\lambda)A(a_2,a_3)+\lambda(1-\lambda)A(a_1,a_4)+\\
\nonumber
& &\lambda^2A(a_1,a_3)\biggl)\\
\label{eq5}
&\ge& (2\lambda-1)+(1-\lambda)^2+\lambda(1-\lambda)+\\
\nonumber
& &\lambda(1-\lambda)+\lambda^2\\
\nonumber
&=&2\lambda>0
\end{eqnarray}
where equation~(\ref{eq4}) follows from Lemma~\ref{lem:consecutive-vectors-1} and
equation~(\ref{eq5}) follows from the fact that by Pick's theorem the area of any non-empty triangle with lattice
vertices is at least $1/2$. Therefore $q_3$ is to the left of $\ell(o,q_1)$.
Similarly, since we have that $\turn(a_i,a_j)>0$ ($i=1,2$; $j=3,4$) then
$$\turn(h(\lambda,a_2,a_1),h(\lambda,a_3,a_4))>0,$$ which implies that $q_4$ is to the left of $\ell(o,q_1)$
given that $q_3$ is to the left of $\ell(o,q_1)$.
By symmetry, it can be proved that $\turn(q_4,q_3,q_1)<0$ and $\turn(q_4,q_3,o)<0$,
implying that both $q_1$ and $o$ are to the right of $\ell(q_4,q_3)$.
\end{proof}

\begin{lemma}\label{lem:linear-time}
Algorithm {\sc VisiblePoints} can be implemented to run in $O(n)$ time
in the unit cost RAM model, $n\geq 3$.
\end{lemma}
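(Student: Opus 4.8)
The plan is to bound the two subroutines of the construction, i.e.\ algorithms {\sc VisibleVectors} and {\sc BuildDoubleCircle}, separately. The cost of {\sc BuildDoubleCircle} is easily seen to be linear once the sorted sequence $V$ is available: computing $\alt(V)$ is a fixed permutation of $2n$ entries; each $w_i$ is a constant number of vector additions and scalar multiplications, each $O(1)$ under the unit-cost RAM assumption; and building $\mathcal{P}((1/\lambda)W)$ is a single prefix-sum pass over $2n$ vectors. Hence all the difficulty is concentrated in {\sc VisibleVectors}, and the task reduces to generating the $2n$ visible vectors, in angular order, in $O(n)$ time.

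A direct reading of the pseudocode gives $O(n\log n)$, and the heart of the proof is removing the two logarithmic factors. First I would bound the number of pairs inspected by the main loop. By Lemma~\ref{lem:correct-visible-points} we have $k=\max_{v\in V}\|v\|_1=\Theta(\sqrt n)$, so the loop performs $k-1=O(\sqrt n)$ outer iterations and $\sum_{t=2}^{k}(t-1)=O(k^2)=O(n)$ inner iterations in total; thus only the \emph{per-iteration} gcd test and the final angular sort threaten linearity. The gcd test simplifies via the identity $\gcd(i,k-i)=\gcd(i,k)$, so that invariant~(ii) is really a coprimality test against the single modulus $k$; one could then run a smallest-prime-factor sieve up to $k=O(\sqrt n)$ in $o(n)$ time and, for each shell value $t$, mark the multiples of the prime divisors of $t$, the total marking work being $\sum_{t\le k}\sum_{p\mid t}\lfloor t/p\rfloor=O\bigl(k^2\sum_{p}p^{-2}\bigr)=O(k^2)=O(n)$.

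The main obstacle is the instruction ``sort $V$ counterclockwise,'' which a comparison sort cannot beat $\Theta(n\log n)$. I would avoid the sort entirely by \emph{generating} the vectors already in angular order, which has the further benefit of making primitivity free and so dissolving the gcd cost as well. Restricting to the first quadrant (the other three arise from the sign symmetries of the $\ell_1$-ball, with care at the axes to avoid duplication), the visible vectors correspond to the positive rationals $j/i$ in lowest terms with $i+j\le k$, and these are enumerated in increasing slope by an in-order traversal of the Stern--Brocot tree. Two facts make this linear: every node of the tree is automatically a primitive vector, so no gcd is ever computed; and each child is the mediant of its ancestors and hence has strictly larger $\ell_1$-norm, so the predicate $i+j\le k$ prunes an entire subtree once it fails. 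The traversal therefore visits only the $\Theta(n)$ relevant nodes at $O(1)$ amortized cost each. The one technical wrinkle, which I expect to be the most delicate point, is that {\sc VisibleVectors} keeps \emph{all} visible vectors of $\ball_{k-1}$ but only a prefix, in the order of invariant~(ii), of the shell $\|v\|_1=k$; I would generate the full interior in sorted order as above, separately produce the $O(\sqrt n)$ selected shell vectors and sort them in $o(n)$ time, and merge the two sorted lists in one linear pass. Combining these pieces with the linear cost of {\sc BuildDoubleCircle} yields the claimed $O(n)$ bound.
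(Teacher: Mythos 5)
Your proposal is correct, but it reaches the $O(n)$ bound by a genuinely different route than the paper. The paper keeps the algorithm exactly as written and attacks the two bottlenecks head-on: for the gcd tests it precomputes a table of $\gcd(i,j)$ for all $i,j\le m=\Theta(\sqrt n)$ in $O(m^2)=O(n)$ time and space via the recurrence $\gcd(i,j)=\gcd(i-j,j)$, giving constant-time queries; for the angular sort it uses a bucket sort whose buckets are the $m^2$ triangles of area $1/2$ with apex at the origin and bases along the segment from $(0,m)$ to $(m,0)$ --- since any two distinct visible vectors in the open first quadrant span a triangle of area at least $1/2$ (the same Pick's-theorem fact underlying Lemma~\ref{lem:consecutive-vectors-1}), each bucket holds at most one vector, and the bucket index of $(i,j)$ is available in closed form as $t=\lceil im^2/(i+j)\rceil$. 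You instead dissolve both bottlenecks at the source: coprimality via $\gcd(i,k-i)=\gcd(i,k)$ and a prime-factor sieve with total marking cost $O(k^2)=O(n)$, and the ordering by generating the first-quadrant vectors already in slope order through a pruned Stern--Brocot traversal (primitivity for free, subtree pruning justified because mediants strictly increase the $\ell_1$-norm), with the partially-included outer shell sorted separately in $o(n)$ time and merged. Both arguments are sound; the paper's buys simplicity on top of machinery it already has (the area-$1/2$ fact) and leaves the pseudocode untouched, while yours trades a restructured generation and the shell-merge wrinkle for eliminating the $\Theta(n)$-space gcd table (your auxiliary structures are $O(\sqrt n)$-size) and eliminating the sort rather than speeding it up. Your accounting of the delicate points --- the $\Theta(\sqrt n)$ bound on $k$ imported from Lemma~\ref{lem:correct-visible-points}, the $O(k^2\sum_p p^{-2})$ sieve bound, and the prefix-of-the-shell issue --- matches the level of rigor of the paper's own proof.
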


\begin{proof}
Let $n\geq 3$ and $V$ be the answer of calling {\sc VisiblePoints}$(n)$.
Let $m:=1+\max_{v\in V}\|v\|_1$ which satisfies $m=\Theta(\sqrt{n})$ 
(see the proof of Lemma~\ref{lem:correct-visible-points}). 
Using $O(m^2)$ space, and the facts $\gcd(i,i)=i$ and
$\gcd(i,j)=\gcd(i-j,j)$ for $i>j$, one can compute
$\gcd(i,j)$ in constant time for any $i,j$. Then, computing
$V$ without the radial sorting around the origin requires $O(m^2)=O(n)$ time.

We show now that $V$ can be radially sorted around the origin in $O(n)$ 
time via Bucket Sort,
and it suffices to show how to sort the vectors $V'\subset V$ that
belong to the interior of the first quadrant. Let $b_t$ denote the 
point $(t/m,m-t/m)$ for $t=0,1,\ldots,m^2$, and consider the 
$m^2$ triangles $\Delta_1,\Delta_2,\ldots,\Delta_{m^2}$ 
as buckets, where triangle $\Delta_t$ $(t=1,\ldots,m^2)$ has vertices $o$, $b_{t-1}$, and
$b_t$. Observe that triangles $\Delta_1,\Delta_2,\ldots,\Delta_{m^2}$ have
pairwise disjoint interiors, and all have area equal 
to $1/2$. If $a_1$ and $a_2$ are
two different vectors of $V'$ then $A(o,a_1,a_2)\ge 1/2$, which implies that
$a_1$ and $a_2$ cannot belong to a same triangle (bucket) $\Delta_t$
given that both are not contained in the segment $b_{t-1}b_t$ for $t=1,\ldots,m^2$. Therefore,
every triangle among $\Delta_1,\Delta_2,\ldots,\Delta_{m^2}$ contains
at most one point of $V'$. Given any vector $a:=(i,j)\in V'$, the triangle
$\Delta_t$ that contains $(i,j)$ can be found in constant time. Namely,
$t$ is the smallest value in the range $[1\ldots m^2]$ such that
$\turn(o,a,b_t)\le 0$, that is,
\begin{eqnarray*}
\left|
\begin{array}{ccc}
0 & 0 & 1\\
i & j & 1\\
t/m & m-t/m & 1
\end{array}
\right| =  i(m-t/m)-j(t/m) &\leq & 0\\
\frac{i\cdot m^2}{i+j} & \leq & t
\end{eqnarray*}
where $t$ satisfies $t=\lceil\frac{i m^2}{i+j}\rceil$. Since $t$, and then $\Delta_t$, can be
found in constant time, the vectors of $V'$ can be sorted in
$O(m^2)=O(n)$ time.
\end{proof}

\begin{theorem}\label{teo:double-circle}
There is an $O(n)$-time algorithm that for all $n\ge 3$ builds a double circle of $2n$ points
in the grid $[0,N]^2$ where $N=O(n^{3/2})$.
\end{theorem}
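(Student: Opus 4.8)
The plan is to verify separately the three assertions bundled into the theorem for the point set produced by {\sc BuildDoubleCircle}$(n)$, namely that $\mathcal{P}((1/\lambda)W)$ is a double circle, that it consists of lattice points fitting in a grid of size $O(n^{3/2})$, and that it is computed in $O(n)$ time. The running time is the easiest part: by Lemma~\ref{lem:linear-time} the call to {\sc VisibleVectors}$(n)$ runs in $O(n)$ time, and the remaining steps (forming $\alt(V)$, the $2n$ vectors $w_i$, and accumulating the partial sums of $(1/\lambda)W$) amount to a single linear scan. For integrality I would substitute $\lambda=1/3$, obtaining $(1/\lambda)w_i=2v'_i+v'_{i+1}$ and $(1/\lambda)w_{i+1}=v'_i+2v'_{i+1}$; since every $v'_j$ is an integer visible vector, all vectors of $(1/\lambda)W$, and hence all points of $\mathcal{P}((1/\lambda)W)$, are lattice points, which can then be translated into $[0,N]^2$.

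For the size bound I would exploit the relation between the partial sums of $(1/\lambda)W$ and those of $V$. Writing $P_j$ for the $j$-th point of $\mathcal{P}((1/\lambda)W)$, a direct cancellation gives $w_{2m-1}+w_{2m}=v'_{2m-1}+v'_{2m}=v_{2m-1}+v_{2m}$, so every even partial sum satisfies $P_{2m}=(1/\lambda)\sum_{l=1}^{2m}v_l$, i.e.\ $P_{2m}$ is exactly $1/\lambda=3$ times the $2m$-th vertex of $\mathcal{P}(V)$. Each odd partial sum $P_{2m-1}$ differs from $P_{2m}$ by the single vector $(1/\lambda)w_{2m}=v'_{2m-1}+2v'_{2m}$, whose $\ell_1$-norm is $O(k)=O(\sqrt n)$ since $k=\max_{v\in V}\|v\|_1=\Theta(\sqrt n)$. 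Thus every $P_j$ lies within $O(\sqrt n)$ of a point of $3\,\mathcal{P}(V)$, and since $\mathcal{P}(V)$ has size $O(n^{3/2})$ by Lemma~\ref{lem:correct-visible-points}, so does $\mathcal{P}((1/\lambda)W)$.

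The core of the proof is establishing the four defining properties of a double circle, and here Lemmas~\ref{lem:consecutive-vectors-1} and~\ref{lem:lambda} do the work. Because scaling by the positive constant $1/\lambda$ and translating preserve every orientation, I can read the local structure directly from Lemma~\ref{lem:lambda}: applying it to the block $a_1=v_{2m-1},\dots,a_4=v_{2m+2}$, with the origin $o$ playing the role of $P_{2m-2}$, shows that $P_{2m}$ lies to the right of $\ell(P_{2m-2},P_{2m-1})$ while $P_{2m+1},P_{2m+2}$ lie to its left, and symmetrically that $P_{2m-2},P_{2m-1}$ lie to the right of $\ell(P_{2m+2},P_{2m+1})$. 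I would then identify the even-indexed points as the prospective hull vertices and the odd-indexed points as the inner points. Indeed $P_{2m}=3\,\mathcal{P}(V)_{2m}$ are $1/\lambda$ times a subset of the vertices of the convex polygon $\mathcal{P}(V)$, hence in convex position and counterclockwise order, and a short computation using Lemma~\ref{lem:consecutive-vectors-1} shows that $P_{2m-1}$ lies on the interior side of the segment $P_{2m-2}P_{2m}$ at lattice distance governed by $A(v_{2m-1},v_{2m})=1/2$, giving the closeness property (2). The local left/right statements of Lemma~\ref{lem:lambda} are precisely the separation conditions (3) and (4) restricted to the neighbours of each inner point.

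The hard part will be promoting these local facts to the global statements required by the definition: that the even-indexed points are exactly the vertices of the convex hull of all $2n$ points (property (1)), and that each line $\ell(p_i,p'_i)$ has $p_{i+1}$ alone on one side and every remaining point on the other (and symmetrically for $\ell(p'_i,p_{i+1})$). I would obtain both by comparing each separating line with the hull edge $\ell(P_{2m-2},P_{2m})$: by convexity of the even-indexed polygon every other even-indexed point lies strictly on the interior side of that edge, while the separating line differs from it by a rotation about $P_{2m-2}$ whose magnitude is pinned down by the fixed area $A(v_{2m-1},v_{2m})=1/2$ of Lemma~\ref{lem:consecutive-vectors-1}. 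Since the vertices are angularly monotone as seen from $P_{2m-2}$, it suffices that the first vertex past $p_{i+1}$ in the sweep stays on the interior side, and this is exactly the neighbour already controlled by Lemma~\ref{lem:lambda}; the inner points, each hugging its own hull edge, then lie on the interior side as well. General position follows because every orientation used above is a strictly nonzero determinant, bounded away from $0$ by the Pick-theorem lower bound of $1/2$ on the area of a nonempty lattice triangle. Assembling these facts yields a lattice double circle of $2n$ points inside $[0,N]^2$ with $N=O(n^{3/2})$, computed in $O(n)$ time.
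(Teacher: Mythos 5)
Your proposal is correct and takes essentially the same route as the paper, whose proof is exactly the combination you give: integrality of the points from $\lambda=1/3$, the double-circle property from Lemma~\ref{lem:lambda}, the $O(n^{3/2})$ size bound from Lemma~\ref{lem:correct-visible-points}, and the running time from Lemma~\ref{lem:linear-time}. If anything you are more thorough than the paper, which disposes of the geometric core with the single sentence ``By Lemma~\ref{lem:lambda}, the point set $P$ is a double circle'': you explicitly identify and sketch the local-to-global promotion that this citation leaves implicit (your angular-monotonicity sweep for the hull vertices is sound, though the claim that the remaining inner points, ``hugging'' their own edges, stay on the correct side of each separating line is still only asserted rather than proved), and your identity $P_{2m}=3p_{2m}$ together with the $O(\sqrt{n})$ offset of the odd-indexed points makes precise the size argument that the paper compresses into the unproved assertion that all of $P$ lies in the polygon with vertices $\mathcal{P}(3V)$.
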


\begin{proof}
Execute the algorithm {\sc BuildDoubleCircle} with input $n$,
being $V$ the result of calling {\sc VisiblePoints}$(n)$,
building the point set $P$ of $2n$ points. Observe that $\lambda=1/3$
implies that point $w_i/\lambda=3 w_i$ is integer for $i=1\ldots 2n$, and then all elements of $P$
are integer points. By Lemma~\ref{lem:lambda}, the point set $P$
is a double circle.
The size of $\mathcal{P}(V)$ is $O(n^{3/2})$ by Lemma~\ref{lem:correct-visible-points},
and since all elements of $P$ belong to the polygon with vertices $\mathcal{P}(3V)$,
the size $N$ of $P$ is also $O(n^{3/2})$.
Finally, translate $P$ to lie in the grid $[0,N]^2$. 
Since algorithm {\sc VisiblePoints} can run in $O(n)$ time (Lemma~\ref{lem:linear-time}),
algorithm {\sc BuildDoubleCircle} can be done in $O(n)$ time. 
The result follows.
\end{proof}

\section{Future work}\label{sec:conclusions}

We are working on extending the results of this paper to build other known point
sets in integer points of small size, such as the double convex chain, the Horton set, and others.
We plan to eventually release a software library supporting many of these constructions.

\small

%

\small

\bibliographystyle{plain}
\bibliography{double_circle}


\newpage
\appendix

\section{Examples}\label{sec:examples}

\begin{figure}[h]
	\centering
	\subfloat[]{
		\begin{tabular}{cc}
			\includegraphics[scale=0.4]{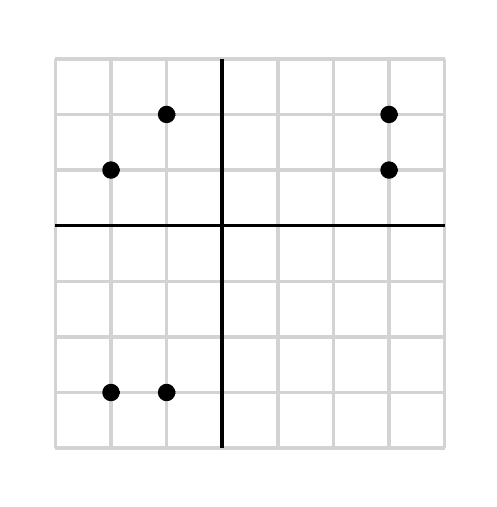} &
			\includegraphics[scale=0.4]{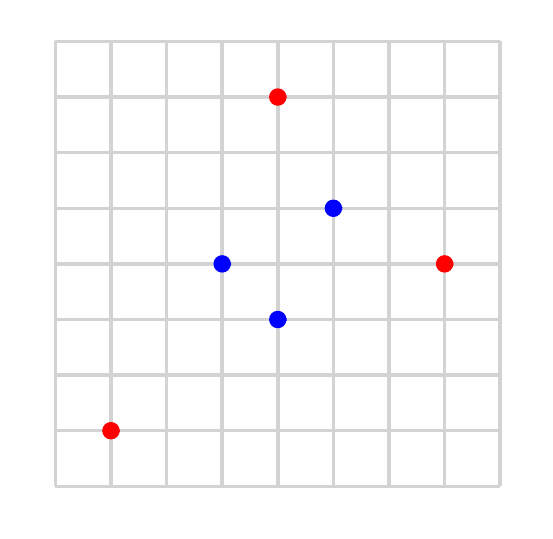}
		\end{tabular}
		\label{fig:n3}
	}
	\subfloat[]{
		\begin{tabular}{cc}
			\includegraphics[scale=0.4]{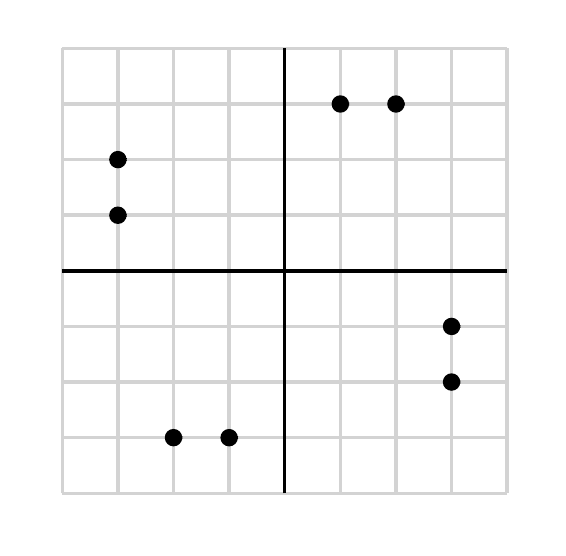} &
			\includegraphics[scale=0.4]{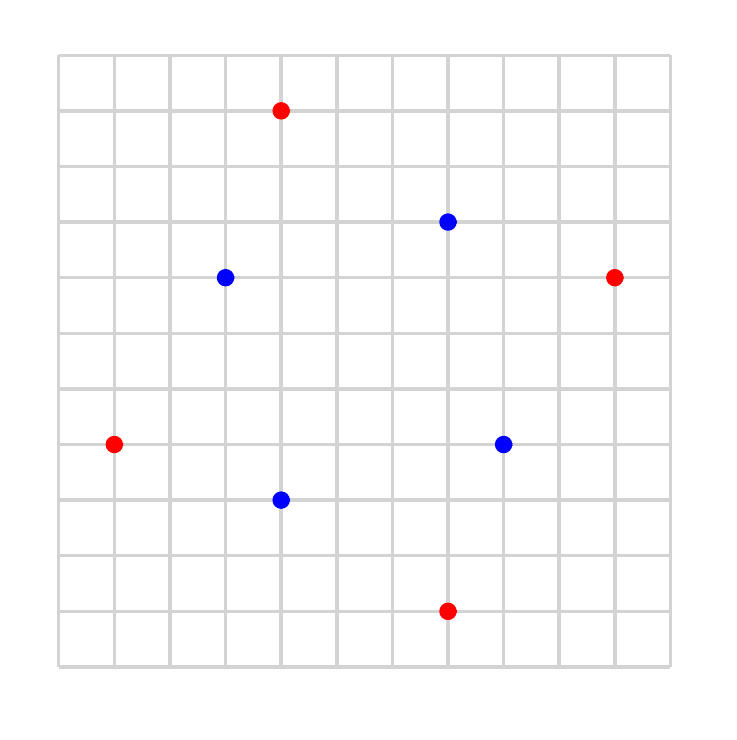}
		\end{tabular}
		\label{fig:n4}
	}\\
	\subfloat[]{
		\begin{tabular}{cc}
			\includegraphics[scale=0.35]{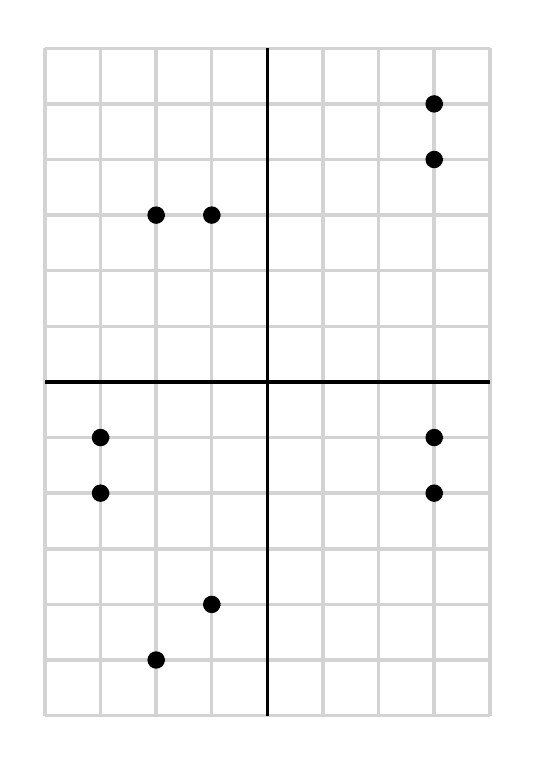} &
			\includegraphics[scale=0.35]{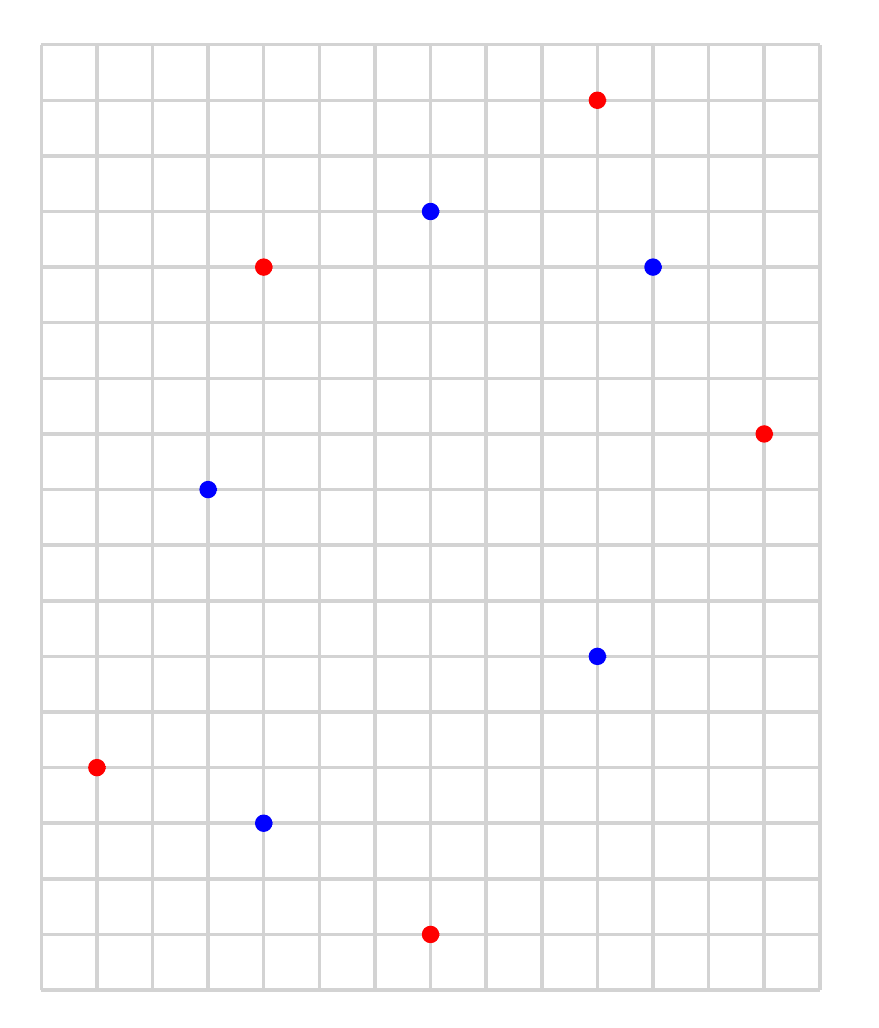}
		\end{tabular}
		\label{fig:n5}
	}\\
	\subfloat[]{
		\begin{tabular}{cc}
			\includegraphics[scale=0.35]{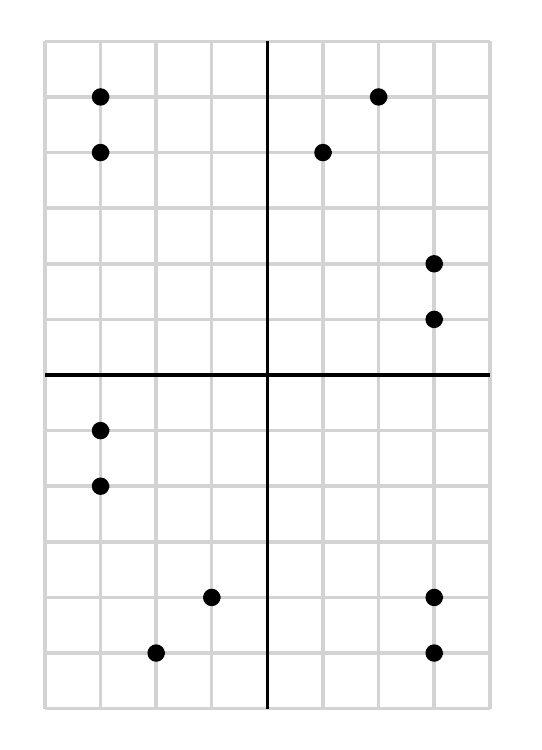} &
			\includegraphics[scale=0.35]{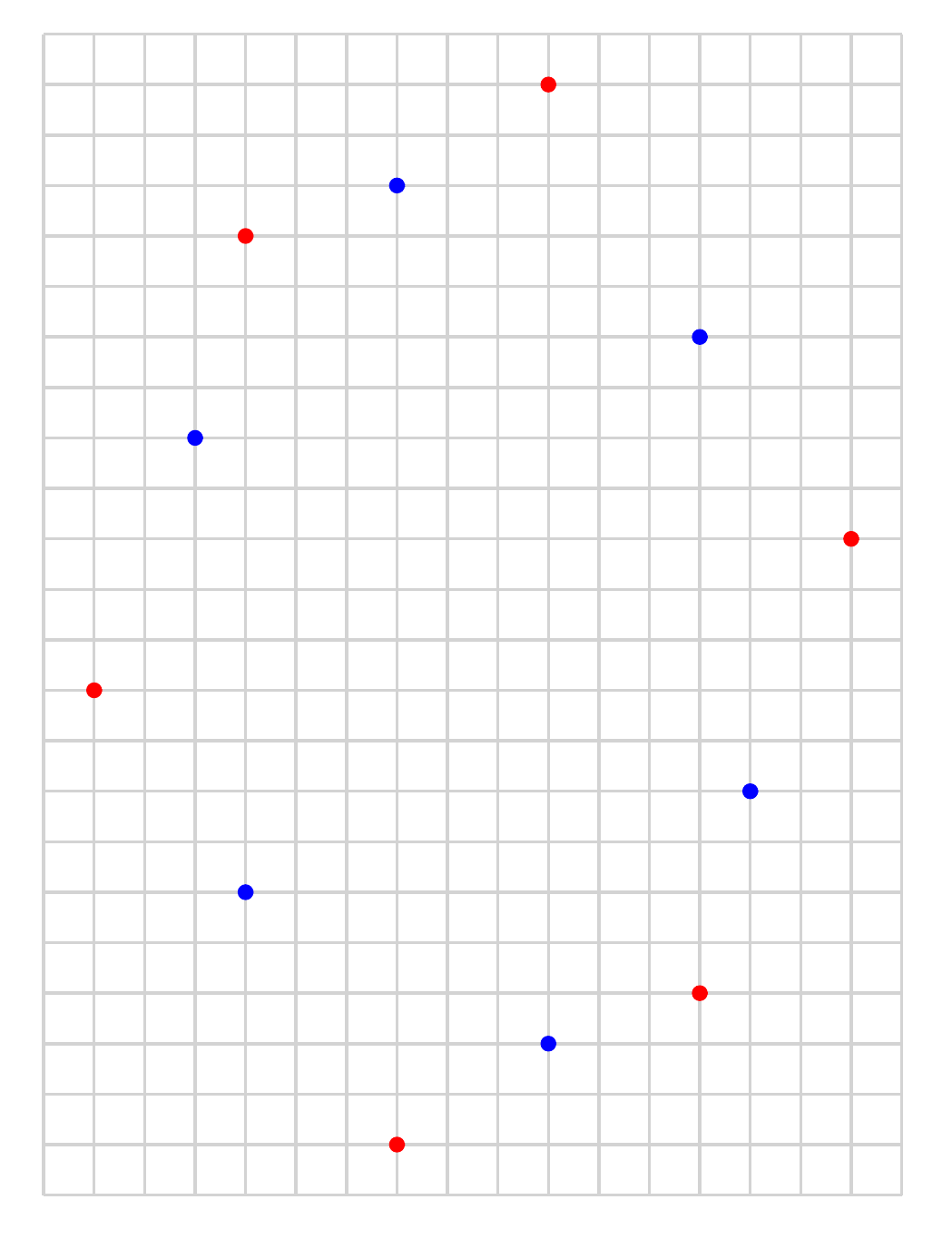}
		\end{tabular}
		\label{fig:n6}
	}
	\caption{\small{Output of the algorithm for $n=3,4,5,6$ (right) and the corresponding vectors $W$ (left).
	}}	
\end{figure}


\begin{figure}[h]
	\centering
	\subfloat[]{
		\includegraphics[scale=0.15]{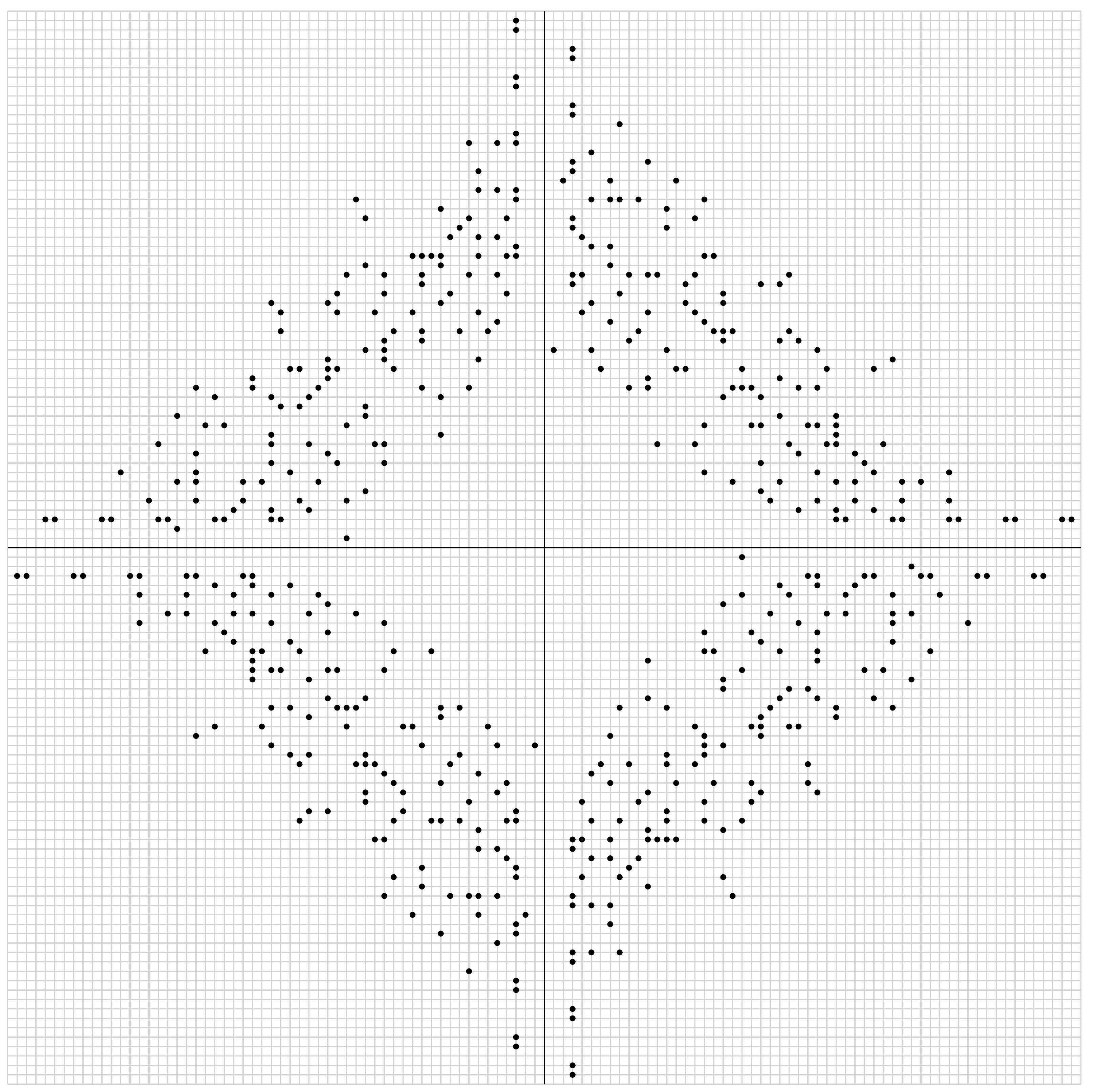}
		\label{fig:n256}
	}
	\subfloat[]{
		\includegraphics[scale=0.25]{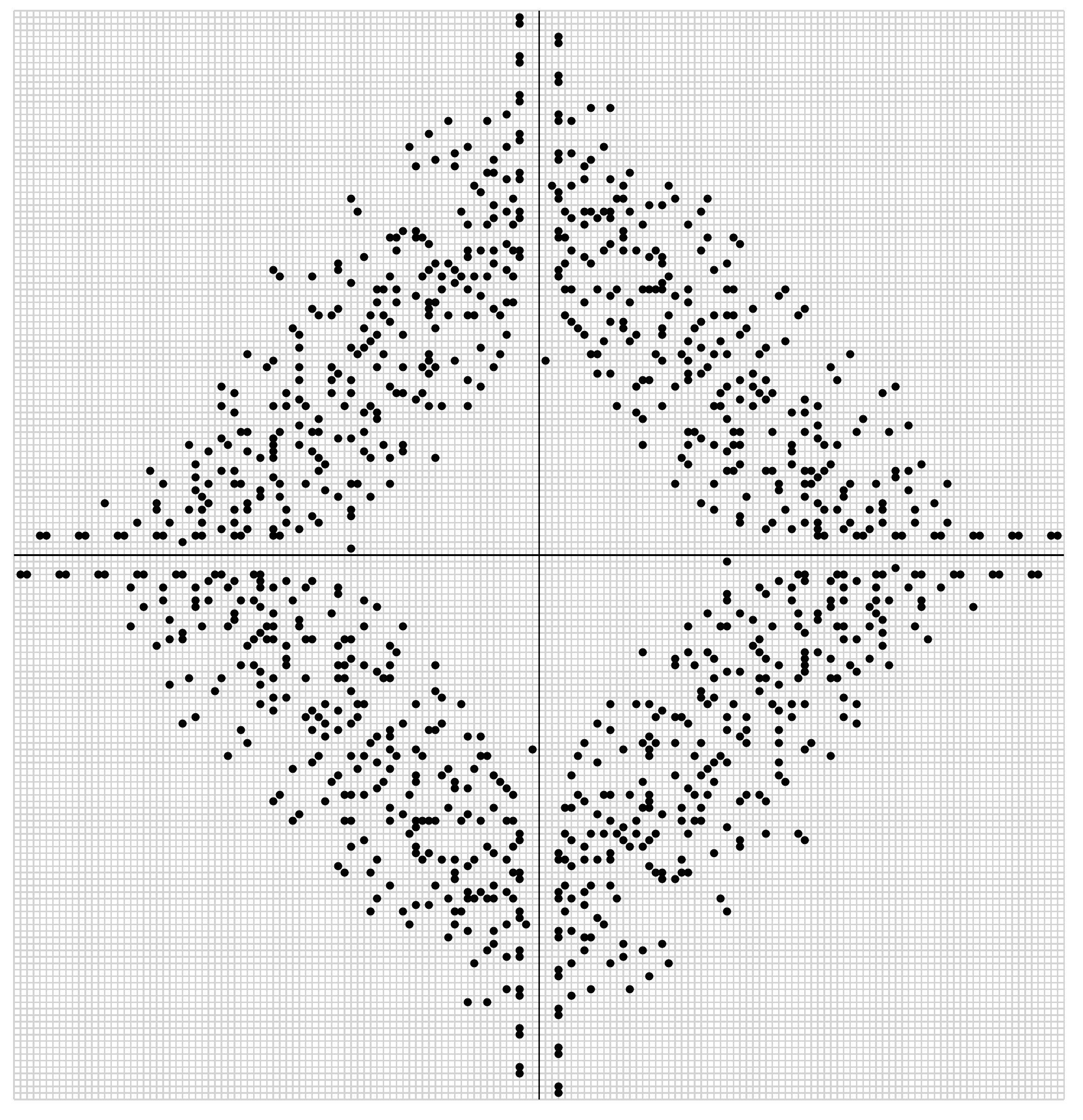}
		\label{fig:n512}
	}\\
	\subfloat[]{
		\includegraphics[scale=0.25]{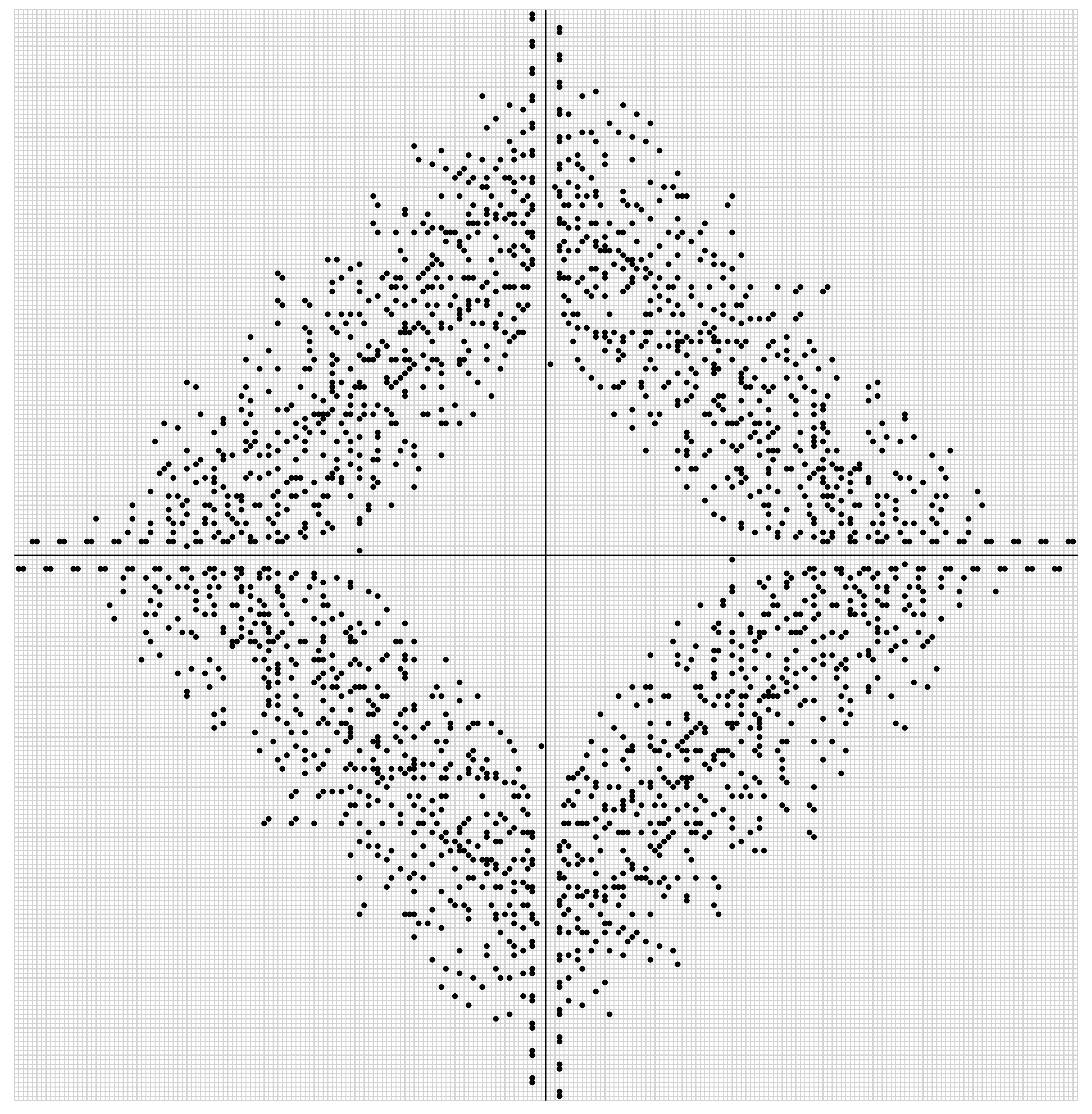}
		\label{fig:n1024}
	}
	\caption{\small{Vectors of sequence $W$ for $n=256,512,1024$.}}	
\end{figure}

\end{document}